\theoremstyle{plain}
\newtheorem{theorem}{Theorem}
\newtheorem{corollary}{Corollary}
\newtheorem{lemma}{Lemma}
\theoremstyle{definition}
 \newcommand{{ \input{ps/.pstex_t} }}[1]{{ \input{ps/#1.pstex_t} }}
\newcommand{\cancel}[1]{}
\newcommand{\eps}{\varepsilon}
\newcommand{\mE}{\mathcal E}
\newcommand{\mX}{\mathcal X}
\newcommand{\mS}{\mathcal S}
\newcommand{\mQ}{\mathcal Q}
\DeclareMathOperator{\tr}{tr}
\DeclareMathOperator{\ext}{ext}
\DeclareMathOperator{\guess}{guess}
\DeclareMathOperator{\ReSamp}{ReSamp}
\newcommand{\bra}[1]{{\langle {#1}|}}
\newcommand{\ket}[1]{{|{#1}\rangle}}
\newcommand{\ketbra}[2]{{\ket{#1}{\bra{#2}}}}
\title{Bitwise Quantum Min-Entropy Sampling and \\ New Lower Bounds for Random Access Codes}
\begin{document}

\author{J{\"u}rg Wullschleger\\
\textit{DIRO, Universit\'e de Montr\'eal, Quebec, Canada}\\
\textit{McGill University, Quebec, Canada}
}

\maketitle

\begin{abstract}
 \emph{Min-entropy sampling} gives a bound on the min-entropy of a randomly chosen subset of a string, given a bound on the min-entropy of the whole string. K\"onig and Renner showed a min-entropy sampling theorem that holds relative to quantum knowledge. Their result achieves the optimal rate, but it can only be applied if the bits are sampled in block, and only gives weak bounds for non-smooth min-entropy.

We give two new quantum min-entropy sampling theorems that do not have the above weaknesses. The first theorem shows that the result by K\"onig and Renner also applies to bitwise sampling, and the second theorem gives a strong bound for the non-smooth min-entropy.

Our results imply strong lower bounds for $k$-out-of-$n$ random access codes. While previous results by Ben-Aroya, Regev, and de~Wolf showed that the decoding probability is exponentially small in $k$ if the storage rate is smaller than $0.7$, our results imply that this holds for any storage rate strictly smaller than $1$, which is optimal.
\end{abstract}

\section{Introduction}

Let us assume that two players share a long string $x \in \{0,1\}^n$, over which an adversary has only partial knowledge. They would like to get a key, over which the adversary has almost no knowledge. Since the string is long, using a 2-universal hash function or, more generally, a normal strong extractor would be inefficient and hence impractical. Vadhan showed in \cite{Vadhan04} that the two players can instead first randomly sample a relatively small substring $x' \in \{0,1\}^{k}$ of $x$, and then apply an extractor to $x'$. This works because with high probability, the string $x'$ will have almost $\frac {k} n \cdot t$ bits of min-entropy, if the min-entropy of $x$ is at least $t$.
K\"onig and Renner showed in \cite{KoeRen07} that this holds works in the more general case where the adversary has quantum information about $x$. Again, with high probability the string $x'$ will have almost $\frac {k} n \cdot t$ bits of quantum min-entropy. 

Related to these results are lower bounds for \emph{random access codes}. 
This is an encoding of $n$ classical bits into $m<n$ qubits, such that from the encoding, a randomly chosen subset of size $k$ can be guessed with probability at least $p$. The first lower bound was given for the case where $k = 1$ by Ambainis, Nayak, Ta-Shma and Vazirani in \cite{ANTV98}. It was later improved by Nayak in  \cite{Nayak99} to $m \geq (1 - H(p))n$, where $H( \cdot )$ is the binary entropy function. For the general case where $k \geq 1$, a lower bound was presented by Ben-Aroya, Regev, and de~Wolf in \cite{BeReWo08}. They showed that for any $\eta > 2 \ln 2$ there exists a constant $C_{\eta}$ such that
\[ p \leq C_{\eta} \left ( \frac 1 2 + \frac 1 2 \sqrt{ \frac{\eta m} n } \right ) ^k\;.\]
It implies that if $m < n / (2 \ln 2) \approx 0.7 n$, then $p \leq 2 ^{-\Omega(k)}$.
In the same work they also showed lower bounds for a variant of random access codes called \emph{XOR-random access codes}, where the player is asked to guess the XOR of a random subset of size $k$. De and Vidick presented in \cite{DeVid10} lower bounds for \emph{functional access codes}, which is a generalization of XOR-random access codes where the player is asked to guess the output of a function with binary output chosen from a bigger set.

The result in \cite{Vadhan04} implies a classical lower bound for $k$-out-of-$n$ random access codes. In principle, this would also be possible in the quantum setting, as the min-entropy is defined as minus the logarithm of the guess probability. Unfortunately, the results by K\"onig and Renner are not general enough to do that, because they require the sampling to be done in blocks.

\cite{BeReWo08} showed that lower bounds for $k$-out-of-$n$ random access codes imply lower bounds for the one-way communication complexity of $k$ instances of the disjointness problem.

\subsection{Contribution}

In this work we give two new results for quantum min-entropy sampling.

First, we show in Theorem \ref{thm:bitwise} in Section~\ref{sec:bitwise} that the bounds given in Corollary 6.19 and Lemma 7.2 in \cite{KoeRen07} also apply to the case where the sample is chosen bitwise, instead of (recursively) in blocks. This result simplifies some protocols\footnote{For example, it allows that the simpler and more intuitive Protocol~2' in \cite{KoWeWu09} can be proved secure, instead of the more complicated Protocol~2.} as it eliminates an artificial extra step where the bits have to be grouped in blocks.

Second, building on previous results given in \cite{BeReWo08} and \cite{DeVid10}, in Section~\ref{sec:mainthm} we will give a new quantum sampling theorem (Theorem \ref{thm:main}). The proof of Theorem \ref{thm:main} is much simpler than the min-entropy sampling results in \cite{KoeRen07}, and give stronger bounds for non-smooth min-entropy. It implies the following corollary.

\begin{corollary} \label{cor:main}
Let a cq-state $\rho_{XQ}$ be given, where $X \in \{0,1\}^n$. Let $T$ be a random subset of $[n]$ of size $k$. If for a constant $c \in [0,1]$ we have $H_{\min}(X \mid Q)_\rho \geq cn$,
then
\[ H_{\min}(X_T \mid T Q)_\rho \geq \frac{H^{-1}(c/2)}6  k  - 5\;.\]
\end{corollary}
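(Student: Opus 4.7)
The plan is to derive this corollary as an essentially mechanical consequence of Theorem~\ref{thm:main}, the paper's main bitwise sampling bound. The role of the corollary is simply to express Theorem~\ref{thm:main} in a clean ``min-entropy rate'' form suitable for quoting in random-access-code applications, so the work should be a substitution plus constant-chasing rather than a new idea.

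First, I would restate the hypothesis $H_{\min}(X\mid Q)_\rho \geq cn$ in its equivalent guessing-probability form, $P_{\mathrm{guess}}(X\mid Q)_\rho \leq 2^{-cn}$, since it is this quantity that is more directly manipulated in the \cite{BeReWo08}/\cite{DeVid10}-style arguments that Theorem~\ref{thm:main} is built on. Then I would apply Theorem~\ref{thm:main} to the state $\rho_{XQ}$ for a uniformly random subset $T$ of size $k$. The expected output is an exponential-in-$k$ bound on $P_{\mathrm{guess}}(X_T \mid TQ)_\rho$, equivalently a linear-in-$k$ lower bound on $H_{\min}(X_T \mid TQ)_\rho$, whose per-bit rate is determined by an explicit function of $H_{\min}(X\mid Q)_\rho/n$.

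The third step is to identify this rate with $H^{-1}(c/2)/6$. The binary entropy inverse enters through a Nayak-style inversion: a given bound on the min-entropy rate $c$ of $X$ given $Q$ translates, for a typical bit, into an upper bound on the single-bit guessing probability that is controlled by $H^{-1}$ applied to a related quantity. The halving $c/2$ should come from an averaging or smoothing step that converts the min-entropy hypothesis on the \emph{whole} string into a bound applicable to each bit of the sample, while the factor $1/6$ collects constants from the proof of Theorem~\ref{thm:main} (in particular from the range $\eta > 2\ln 2$ inherited from \cite{BeReWo08}). Finally the additive $-5$ absorbs the $O(1)$ slack from all of these steps, including the rounding needed so that the corollary holds uniformly in $c \in [0,1]$ and for all $k \geq 1$.

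The main obstacle I expect is the bookkeeping in the third step: one has to choose a specific value of the free parameter (analogous to $\eta$) in Theorem~\ref{thm:main} that simultaneously yields the clean factor $H^{-1}(c/2)/6$ and keeps the additive loss bounded by $5$, and one must verify that the resulting inequality is valid for \emph{all} $c \in [0,1]$ (including the degenerate regimes $c \to 0$ and $c \to 1$, where $H^{-1}(c/2)$ is either near $0$ or near $1/2$). Once the parameter is fixed, the remaining verification should be a short calculus-level estimate on the binary entropy function, with no further quantum content.
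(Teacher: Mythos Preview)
Your high-level plan---derive the corollary by plugging into Theorem~\ref{thm:main} and chasing constants---is exactly what the paper does. However, your speculation about \emph{how} the constants arise is off in several places, and if you followed those hints you would be looking for structure that isn't there.

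Concretely: there is no separate ``Nayak-style inversion'' on single-bit guessing probabilities, no averaging or smoothing step producing the factor $c/2$, and no free $\eta$-like parameter to optimize. Theorem~\ref{thm:main} already has $H(\cdot)$ sitting explicitly in its hypothesis, in the term $H\bigl(\tfrac{6}{k}\log\tfrac{17}{p}\bigr)n$; the only parameter is $p$ itself. The paper simply sets $\log\tfrac1p := \tfrac{H^{-1}(c/2)}{6}k - 5$ (this is the desired conclusion) and then verifies the two hypotheses of Theorem~\ref{thm:main}. The bound $H^{-1}(c/2)\le\tfrac12$ immediately gives $\log\tfrac1p \le k/12-5$. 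For the main hypothesis, the budget $cn$ is split in half: the choice of $p$ forces $\tfrac{6}{k}\log\tfrac{17}{p}\le H^{-1}(c/2)$ (this is where the $-5>\log 17$ slack is spent), so $H\bigl(\tfrac{6}{k}\log\tfrac{17}{p}\bigr)n\le \tfrac{c}{2}n$; and the remaining additive term $8\log\tfrac{12}{p}+3$ is bounded by $\tfrac{c}{2}n$ using $\log\tfrac1p \le \tfrac{cn}{24}-5$ (which follows from $H^{-1}(x)\le x$ and $k\le n$). That is the entire proof---purely arithmetic, no calculus on $H$, and no quantum content beyond invoking Theorem~\ref{thm:main}.
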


Corollary~\ref{cor:main} immediately implies the following bound for random access codes.

\begin{corollary} \label{cor:RAC2}
Let $\eps>0$ be a constant. For any $k$-out-of-$n$ random access code where the storage is bounded by $m \leq (1-\eps)n$, the success probability is at most $2^{-\Omega(k)}$.
\end{corollary}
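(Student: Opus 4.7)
The plan is to derive Corollary~\ref{cor:RAC2} as an essentially immediate consequence of Corollary~\ref{cor:main}, by reducing the random access code setup to a min-entropy statement. First I would set up the notation: model the code as a cq-state $\rho_{XQ}$ where $X$ is the uniformly distributed input in $\{0,1\}^n$ and $Q$ is the $m$-qubit encoding. Let $T$ be a uniformly random subset of $[n]$ of size $k$. By definition of a $k$-out-of-$n$ random access code, there is a (POVM-valued) decoding strategy that recovers $X_T$ from $(T,Q)$ with probability at least $p$, so the optimal guessing probability satisfies $p_{\guess}(X_T \mid TQ) \geq p$, i.e.
\[
H_{\min}(X_T \mid TQ)_\rho \;\leq\; \log(1/p).
\]

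The next step is to provide the hypothesis of Corollary~\ref{cor:main}, namely a linear lower bound on $H_{\min}(X\mid Q)_\rho$. Here I would use the standard dimension bound: for any POVM $\{E_x\}_{x\in\{0,1\}^n}$ on the $m$-qubit register, since each encoded state $\rho_Q^x$ satisfies $\rho_Q^x \leq I$, we have $\operatorname{tr}(E_x \rho_Q^x) \leq \operatorname{tr}(E_x)$, and hence
\[
p_{\guess}(X\mid Q) \;=\; \max_{\{E_x\}} 2^{-n}\sum_x \operatorname{tr}(E_x \rho_Q^x) \;\leq\; 2^{-n}\operatorname{tr}\!\Bigl(\sum_x E_x\Bigr) \;=\; 2^{m-n}.
\]
Using $m \leq (1-\eps)n$ this gives $H_{\min}(X\mid Q)_\rho \geq n - m \geq \eps n$, so Corollary~\ref{cor:main} applies with the constant $c=\eps$.

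Plugging the resulting bound into the guess-probability inequality yields
\[
\log(1/p) \;\geq\; H_{\min}(X_T\mid TQ)_\rho \;\geq\; \frac{H^{-1}(\eps/2)}{6}\,k \;-\; 5,
\]
so $p \leq 2^{5}\cdot 2^{-H^{-1}(\eps/2)\,k/6}$. Since $\eps>0$ is a constant, $H^{-1}(\eps/2)$ is a positive constant, which gives $p \leq 2^{-\Omega(k)}$ as claimed.

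Because every step is either a definition-unwinding or a standard one-line fact, there is no real obstacle here: the entire content of the corollary lies in Corollary~\ref{cor:main}. The only place that requires any care is the dimension bound $H_{\min}(X\mid Q)_\rho \geq n-m$, and even that is a short direct computation using $\rho_Q^x \leq I$ together with $\sum_x E_x = I$. I would present the argument in essentially the three bullets above.
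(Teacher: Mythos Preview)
Your proposal is correct and follows essentially the same route as the paper: reduce to Corollary~\ref{cor:main} by modelling the code as a cq-state, use the dimension bound $H_{\min}(X\mid Q)\ge n-m\ge \eps n$ (the paper cites this as Proposition~2' in \cite{KoeTer08} rather than proving it inline), and read off $p\le 2^{-\Omega(k)}$.
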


As the results in \cite{BeReWo08}, Corollary~\ref{cor:RAC2} generalizes the bound given by Nayak to the case where $k \geq 1$. But while the results in \cite{BeReWo08} require that $m < 0.7 n$, our results imply that the  success probability decreases exponentially in $k$ even if $m$ is close to $n$.




Together with Lemma 8 in \cite{BeReWo08}, Corollary \ref{cor:RAC2} implies a strong lower bound for the one-way communication complexity of $k$ independent instances of the disjointness problem.

\section{Preliminaries}

The \emph{binary entropy function} is defined as
$H(x) := - x \log x - (1-x) \log (1-x)$
for $x \in [0,1]$, where we use the convention $0 \log 0 = 0$. For $y \in [0,1]$, let $H^{-1}(y)$ be the value $x \in [0,\frac12]$ such that $H(x) = y$. The \emph{Hamming distance} $d_H$ between two strings is defined as the number of bits where the two strings disagree.
We use the notion $[n] := \{1, \dots, n\}$. The substring of $x \in \{0,1\}^n$ defined by the set $s \subset [n]$ is denoted by $x_s$. 

Let $\rho_{XQ}$ be a cq-state of the form
$\rho_{XQ} = \sum_{x} p_x \ketbra{x}{x} \otimes \rho^x_Q$.
The \emph{conditional min-entropy} is defined as
\[ H_{\min}(X \mid Q)_\rho := - \log P_{\guess} (X \mid Q)_\rho\;,\]
where 
\[P_{\guess} (X \mid Q)_\rho := \max_{\mE} \sum_{x \in \mX} P_{X}(x) \tr(E_x\rho_x)\;.\]
The maximum is taken over all POVMs $\mE = \{ E_x\}_{x \in \mX}$ on $\mQ$. $P_{\guess} (X \mid Q)_\rho$ is therefore the probability to correctly guess $X$ by measuring system $Q$. The equivalence of this definition of $H_{\min}$ with the definition used in \cite{KoeRen07} has been shown in \cite{KoReSc08} in Theorem 1.
The \emph{statistical distance} $D(\rho,\phi)$ between two states $\rho$ and $\phi$ is defined as
\[ D(\rho,\phi) = \max_{\mE} \ | \tr(E_1 \rho) - \tr(E_1 \phi) | \;, \]
where we maximize over all POVMs $\mE = \{ E_x\}_{x \in \{0,1\}}$. $D(\rho,\phi)$ is therefore the maximal probability to distinguish $\rho$ and $\phi$ by a measurement. It can be shown that $D(\rho,\phi) = \frac 1 2 \| \rho - \phi\|_1$.

\begin{lemma}
Let $\rho_{X Q}$ be a cq-state where $X$ is binary and let $\tau_X$ be the fully mixed state. Then $D(\rho_{XQ},\tau_X \otimes \rho_Q) \leq \eps$ implies that $P_{\guess} (X \mid Q)_\rho \leq \frac12 + \eps$.
\end{lemma}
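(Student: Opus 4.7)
The plan is to build a two-outcome distinguisher out of the optimal guessing POVM whose acceptance probabilities are exactly $P_{\guess}(X \mid Q)_\rho$ on $\rho_{XQ}$ and exactly $\tfrac12$ on the decoupled state, so that the hypothesis $D(\rho_{XQ},\tau_X\otimes\rho_Q)\leq\eps$ can be applied verbatim.

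Concretely, I would fix a POVM $\{E_0,E_1\}$ on $\mQ$ attaining $P_{\guess}(X\mid Q)_\rho$, and define a POVM $\{F_0,F_1\}$ on $\mX\otimes\mQ$ by
\[
F_1 := \ketbra{0}{0}\otimes E_0 + \ketbra{1}{1}\otimes E_1, \qquad F_0 := I - F_1.
\]
Operationally $F_1$ accepts iff the outcome of measuring $Q$ with $\mE$ matches the classical register, so
\[
\tr\bigl(F_1\,\rho_{XQ}\bigr) \;=\; P_X(0)\tr(E_0\rho_Q^0) + P_X(1)\tr(E_1\rho_Q^1) \;=\; P_{\guess}(X\mid Q)_\rho.
\]
On the decoupled state, using $E_0+E_1=I$,
\[
\tr\bigl(F_1\,(\tau_X\otimes\rho_Q)\bigr) \;=\; \tfrac12\tr(E_0\rho_Q)+\tfrac12\tr(E_1\rho_Q) \;=\; \tfrac12.
\]

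From the definition of statistical distance as the maximum distinguishing advantage over two-outcome POVMs, the difference of these two quantities is at most $D(\rho_{XQ},\tau_X\otimes\rho_Q)\leq\eps$, giving
\[
P_{\guess}(X\mid Q)_\rho - \tfrac12 \;\leq\; \eps,
\]
which is the claim. There is no real obstacle here; the only thing to be careful about is that the distinguisher is defined on the joint $XQ$ system (not just on $Q$) so that the match-check against the classical register is a legitimate measurement, and that one checks $F_1\leq I$ holds (which follows from $E_0,E_1\leq I$ and the orthogonality of $\ketbra{0}{0}$ and $\ketbra{1}{1}$).
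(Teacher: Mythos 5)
Your proposal is correct and is essentially the paper's own argument: both build the two-outcome measurement on $\mX\otimes\mQ$ that checks whether the guess from $\mE$ matches the classical register (the paper phrases it contrapositively via the XOR/mismatch outcome, you phrase it directly via the match outcome), and both use that this distinguisher accepts the decoupled state with probability exactly $\tfrac12$. The only cosmetic difference is direct versus by-contradiction presentation; your version is if anything slightly cleaner since it also verifies $F_1 \leq I$ explicitly.
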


\begin{proof}
Let us assume that there exists a POVM $\mE$ on $\mQ$ that can guess $X$ with a probability bigger than $\frac12 + \eps$. We define a POVM $\mE'$ on $\mX \otimes \mQ$ in the following way: We measure $Q$ using $\mE$ and XOR the output with $X$.
We have $\tr(E'_1 \rho_{X E} ) < \frac12 - \eps$ and $\tr(E'_1 (\tau_X \otimes \rho_Q) ) = \frac 1 2$. Hence  $D(\rho_{XQ},\tau_X \otimes \rho_Q) > \eps$, which contradicts the assumption.
\end{proof}

\begin{lemma}[Chernoff/Hoeffding] \label{lem:chernoff1}
Let $P_{X_0\dots X_n} = P_{X}^n$ be a product distribution
with $X_i \in [0,1]$.  Let
$X := \frac 1 n \sum_{i=0}^{n-1} X_i$, and $\mu = E[X]$. Then, for any $\eps > 0$, 
$\Pr\left[ X \leq \mu - \eps \right ] \leq e^{-2n\eps^2}$.
\end{lemma}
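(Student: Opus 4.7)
The plan is to prove this by the standard Cramér--Chernoff exponential moment method, applied to the negated centered sum. Set $S := \sum_{i=0}^{n-1} X_i$, so $X = S/n$ and $E[S] = n\mu$. For any parameter $t > 0$, apply Markov's inequality to the nonnegative random variable $e^{-tS}$:
\[
\Pr\bigl[X \leq \mu - \eps\bigr] \;=\; \Pr\bigl[e^{-tS} \geq e^{-t n(\mu - \eps)}\bigr] \;\leq\; e^{t n (\mu - \eps)} \, E\bigl[e^{-tS}\bigr].
\]
Because the $X_i$ are independent and identically distributed, the exponential moment factors as $E[e^{-tS}] = \prod_i E[e^{-tX_i}]$.

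Next I would control each factor using Hoeffding's lemma for bounded random variables. Writing $Y_i := X_i - E[X_i]$, the $Y_i$ take values in an interval of length at most $1$ (since $X_i \in [0,1]$), and $E[Y_i] = 0$. The standard Hoeffding bound then gives $E[e^{-t Y_i}] \leq e^{t^2/8}$. Multiplying over $i$ and reinserting the mean yields
\[
E[e^{-tS}] \;\leq\; e^{-t n \mu} \, e^{t^2 n / 8},
\]
so combining with the Markov step gives $\Pr[X \leq \mu - \eps] \leq \exp\bigl(-t n \eps + t^2 n / 8\bigr)$.

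Finally I would optimize over $t > 0$: the exponent $-t n \eps + t^2 n / 8$ is minimized at $t = 4\eps$, where it equals $-2 n \eps^2$. This yields the claimed bound $\Pr[X \leq \mu - \eps] \leq e^{-2 n \eps^2}$.

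The only nontrivial ingredient is Hoeffding's lemma itself, which I would regard as the main technical step. It is proved by noting that, for $Y \in [a,b]$ with $E[Y] = 0$, the function $\psi(t) := \log E[e^{tY}]$ satisfies $\psi(0) = \psi'(0) = 0$ and $\psi''(t) \leq (b-a)^2/4$ uniformly in $t$ (since $\psi''(t)$ is the variance of $Y$ under the tilted distribution, and variance of a variable in $[a,b]$ is at most $(b-a)^2/4$). Two integrations of the inequality $\psi'' \leq (b-a)^2/4$ give $\psi(t) \leq t^2 (b-a)^2 / 8$, which for $b - a \leq 1$ is the estimate used above. Everything else is bookkeeping.
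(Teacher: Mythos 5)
Your proof is correct and complete: the Markov/exponential-moment step, the factorization over independent coordinates, Hoeffding's lemma with interval length $1$, and the optimization at $t=4\eps$ giving exponent $-2n\eps^2$ are all carried out properly. The paper states this lemma as a standard known result and gives no proof of its own, so there is nothing to compare against; your argument is exactly the canonical one that the citation to Chernoff/Hoeffding is standing in for.
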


\section{Bitwise Sampling from Blockwise Sampling} \label{sec:bitwise}

In this section we show that the min-entropy sampling results from \cite{KoeRen07}, which require blockwise sampling, also imply the same bounds for uniform bitwise sampling.

The following theorem is the statement of Corollary 6.19 in \cite{KoeRen07} for uniform sampling. Here $H^\eps_{\min}$ is the \emph{smooth min-entropy}, and $H_0$ the \emph{R\'enyi $0$-entropy}. The definitions of these entropies and their properties can be found in Section 5 in \cite{KoeRen07} or Chapter 3 in \cite{Renner05}.

\begin{theorem}[\cite{KoeRen07}] \label{thm:KR1}
Let $\rho_{XQ}$ be a cq-state where $X = (X_1, \dots, X_n) \in \mX^n$. Let $S \subset [n]$ be chosen uniformly at random among all subsets of size $r$. Assume that $\kappa = \frac{n}{r \log |\mX|} \leq 0.15$. Then
\[ \frac{H^{\eps}_{\min}(X_{S} \mid S, Q )}{H_0(X_{\mS})} \geq \frac{H_{\min}(X \mid Q )}{H_0(X)} - 3 \xi - 2 \kappa \log 1/\kappa \;, \]
where $\eps = 2 \cdot 2^{- \xi n \log|\mX|} + 3 e^{-r \xi^2/8}$
\end{theorem}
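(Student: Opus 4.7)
The plan is to follow the recursive halving approach of KR07, viewing a uniform $r$-subset of $[n]$ as the leaves of a random binary tree of depth $\ell \approx \log(n/r)$. At each level the currently retained set of positions is split into two equal halves and one half is kept uniformly at random; after $\ell$ levels the retained set has size $r$, and its distribution is statistically close to the uniform distribution over all $r$-subsets. Any deviation between this ``balanced binary'' process and true uniform sampling can be absorbed into the smoothing parameter via a Chernoff-style estimate (Lemma~\ref{lem:chernoff1}), which is what will ultimately contribute the $3 e^{-r\xi^2/8}$ term inside $\eps$.

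The heart of the argument would then be a \emph{one-step halving lemma} in the smooth min-entropy calculus: for any cq-state $\rho_{X_1 X_2 Q}$ with $X_1, X_2 \in \mX^{m}$ and $B \in \{1,2\}$ uniform,
\[ H^{\eps_1}_{\min}(X_B \mid B, Q)_\rho \;\geq\; \tfrac{1}{2}\, H_{\min}(X_1 X_2 \mid Q)_\rho \;-\; \xi \cdot m \log|\mX|, \]
where $\eps_1$ is of the form $2 \cdot 2^{-\xi m \log|\mX|}$. Iterating this lemma $\ell$ times peels off one level of the binary tree at a time; normalizing by $H_0(X_S) = r\log|\mX|$ turns the halving factors $2^{-i}$ into the invariant min-entropy \emph{rate}, up to an additive loss per level. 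Summing the losses over the $\ell$ levels produces the $3\xi$ term, while the remaining $2\kappa \log(1/\kappa)$ correction reflects the logarithmic slack from having $\ell = \log(1/\kappa) - \log\log|\mX|$ iterations and absorbing lower-order terms.

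The main obstacle is the halving lemma itself. Classically it is immediate from averaging: either $X_1$ or $X_2$ carries at least half the conditional min-entropy (up to smoothing), and each occurs with probability $1/2$. In the quantum setting the state on $Q$ does not decompose cleanly into ``contributions per side'', so one cannot simply condition. The KR07 strategy is to first pass to a nearby smoothed state on which a chain rule for smooth min-entropy forces the two halves to contribute essentially equally in expectation, and then read off the desired bound; the smoothing error propagates at each level and must be added across all $\ell$ recursive applications. Carefully tracking that accumulation, together with the Chernoff correction from the sampling approximation, produces the claimed $\eps = 2 \cdot 2^{-\xi n \log|\mX|} + 3 e^{-r\xi^2/8}$. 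The condition $\kappa \le 0.15$ is what ensures that $\ell$ is large enough for the approximation of uniform sampling by the halving process to be tight, and small enough that the per-level losses do not swamp the main term.
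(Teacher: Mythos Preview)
This theorem is not proved in the present paper at all: it is quoted as Corollary~6.19 of \cite{KoeRen07}, introduced explicitly with ``The following theorem is the statement of Corollary 6.19 in \cite{KoeRen07} for uniform sampling.'' There is therefore no proof here to compare your proposal against; the paper uses Theorems~\ref{thm:KR1} and~\ref{thm:KR2} as black boxes imported from \cite{KoeRen07} in order to establish Theorem~\ref{thm:bitwise}.

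As for the content of your sketch, the recursive-halving picture you describe is closer in spirit to Theorem~\ref{thm:KR2} (Lemma~7.2 in \cite{KoeRen07}), which \emph{is} obtained by iterating a basic sampling step. Corollary~6.19 in \cite{KoeRen07} is derived differently: via a general framework that lifts a classical sampling-without-replacement estimate directly to quantum smooth min-entropy through a state-splitting/decoupling argument. In that derivation the $3e^{-r\xi^2/8}$ term arises from a single Hoeffding-type bound for the classical sampler, and the $2\cdot 2^{-\xi n\log|\mX|}$ term from the smoothing in the quantum lifting --- not from summing per-level errors over $\ell\approx\log(n/r)$ halving steps. If you actually iterate your one-step halving lemma $\ell$ times and add the smoothing errors, the resulting $\eps$ does not have the stated form (the $2^{-\xi m\log|\mX|}$ terms would be summed over geometrically shrinking $m$, and you would not recover a single $2\cdot 2^{-\xi n\log|\mX|}$). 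So while ``reduce to a one-step lemma and iterate'' is in the right family of ideas, the concrete decomposition and error accounting you propose do not reproduce the KR07 proof of this particular corollary.
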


The statement says that with high probability, the min-entropy rate of a random subset is almost as big as the min-entropy rate of the whole string. To achieve the required condition $n \leq 0.15 \cdot r \log |\mX|$ (for example if $X$ is a bit string), $X$ might have to be grouped into blocks first.
But as pointed out in \cite{BeReWo08}, even then the statement cannot be applied if we want to sample a subset that is smaller than the square-root of the total length of the bit string. 

To overcome this problem, \cite{KoeRen07} proposed a \emph{recursive} application of Theorem~\ref{thm:KR1}. The following theorem is Lemma 7.2 in \cite{KoeRen07}. See Section 7 in \cite{KoeRen07} for the exact definition of the sampling algorithm $\ReSamp(X,f,r,S)$. 

\begin{theorem}[\cite{KoeRen07}] \label{thm:KR2}
Let $\rho_{XQ}$ be a cq-state where $X$ is a $n$-bit string. Let $n$, $f$ and $r$ be such that $n^{(3/4)^f} \geq r^4$. Let $S$ be a string of uniform random bits, and let $Z = \ReSamp(X,f,r,S)$. Then $Z$ is a $n^{(3/4)^f}$-bit substring of $X$, with
\[ \frac{H^{\eps}_{\min}(Z \mid S,Q)}{H_0(Z)} \geq \frac{H_{\min}(X \mid Q )}{H_0(X)} - 
5 f \frac{\log r}{r^{1/4}} \;, \]
where $\eps = 5 f \cdot 2^{-\sqrt{r}/8}$.
\end{theorem}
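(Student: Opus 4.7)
The approach is to apply Theorem~\ref{thm:KR1} recursively $f$ times, with a carefully chosen block structure at each level, and then compose the resulting smooth min-entropy bounds via the triangle inequality.

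First I would fix the block structure that defines $\ReSamp$. At level $i \in \{0,\dots,f-1\}$ the intermediate string $X^{(i)}$ has length $n_i := n^{(3/4)^i}$ bits; group it into $R_i := r \cdot n_i^{1/4}$ blocks of size $b_i := n_i^{3/4}/r$ (so $\mX_i = \{0,1\}^{b_i}$), and let $S_i \subseteq [R_i]$ be a uniformly random subset of size $r$, so that the sampled substring $X^{(i+1)}$ has length $r b_i = n_{i+1}$. The hypothesis $n^{(3/4)^f} \geq r^4$ forces $n_i \geq r^4$ at every level used, so $\kappa_i := R_i/(rb_i) = r/n_i^{1/2} \leq 1/r$, and the hypothesis $\kappa_i \leq 0.15$ of Theorem~\ref{thm:KR1} is satisfied once $r$ is a sufficiently large constant (otherwise the claimed bound is vacuous).

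Next I would apply Theorem~\ref{thm:KR1} at each level with error parameter $\xi := 1/r^{1/4}$, obtaining
\[ \frac{H^{\eps_i}_{\min}(X^{(i+1)} \mid S_0,\dots,S_i, Q)}{H_0(X^{(i+1)})} \geq \frac{H_{\min}(X^{(i)} \mid S_0,\dots,S_{i-1}, Q)}{H_0(X^{(i)})} - 3\xi - 2\kappa_i \log(1/\kappa_i) \]
with $\eps_i = 2\cdot 2^{-n_i/r^{1/4}} + 3 e^{-\sqrt{r}/8}$. Since $n_i \geq r^4$ gives $n_i/r^{1/4} \geq \sqrt{r}/8$ and since $e^{-x} \leq 2^{-x}$, we have $\eps_i \leq 5 \cdot 2^{-\sqrt{r}/8}$. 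Telescoping the rate degradation, $\sum_i (3\xi + 2\kappa_i \log(1/\kappa_i))$ is dominated by $3f\xi = 3f/r^{1/4}$ because $\kappa_i \leq r^{-5/3}$ for every $i < f$ (using $n_i^{1/2} \geq r^{8/3}$), so the $\kappa_i\log(1/\kappa_i)$ terms contribute only $O(f r^{-5/3}\log r)$, and the total loss is at most $5 f \log r/r^{1/4}$.

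The main obstacle is the smoothing bookkeeping: Theorem~\ref{thm:KR1} outputs a bound on \emph{smooth} min-entropy, but the next application needs a bound on the \emph{non-smooth} min-entropy of the input state. I would resolve this in the standard way, by choosing at each level $i$ an approximating cq-state $\bar\rho^{(i)}$ that is $\eps_{i-1}$-close in trace distance to the state obtained after the previous sampling step and that attains the smooth min-entropy bound; applying Theorem~\ref{thm:KR1} to $\bar\rho^{(i)}$ then yields an approximating state $\bar\rho^{(i+1)}$ for the next level. By the triangle inequality the final approximating state is $\sum_i \eps_i \leq 5f \cdot 2^{-\sqrt{r}/8}$-close to the true state, which gives the smoothing parameter claimed in the theorem. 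The delicate point is to verify that smoothing preserves enough of the classical-quantum structure on the intermediate substring and sample-index registers so that Theorem~\ref{thm:KR1} genuinely applies at the next level; this is routine but is where one must be careful about where the classical/quantum cut lies after each approximation.
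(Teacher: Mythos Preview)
The paper does not contain a proof of Theorem~\ref{thm:KR2}: it is quoted verbatim as Lemma~7.2 of \cite{KoeRen07}. The paper even remarks explicitly that this result is obtained by a \emph{recursive} application of Theorem~\ref{thm:KR1}, which is precisely the route you take. So your plan matches the intended argument; there is nothing in the present paper to compare it against beyond that remark.

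Your parameter choices check out: with $R_i=r\,n_i^{1/4}$ blocks of size $b_i=n_i^{3/4}/r$ one gets $\kappa_i=r/n_i^{1/2}$, and since $n^{(3/4)^f}\ge r^4$ forces $n_{f-1}\ge r^{16/3}$ (and a fortiori $n_i\ge r^{16/3}$ for all $i\le f-1$), one indeed has $\kappa_i\le r^{-5/3}$. With $\xi=r^{-1/4}$ the per-level smoothing error is bounded by $5\cdot 2^{-\sqrt r/8}$ and the per-level rate loss by $3r^{-1/4}+O(r^{-5/3}\log r)$, so the $f$-fold telescoping yields the stated $5f\log r/r^{1/4}$ and $5f\cdot 2^{-\sqrt r/8}$. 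The smoothing bookkeeping you describe (pass to a nearby cq-state realizing the smooth min-entropy, apply Theorem~\ref{thm:KR1} there, and accumulate the trace-distance errors via the triangle inequality) is the standard way this composition is handled; your caveat about preserving the cq structure of the intermediate states is the right thing to flag, and it goes through because the smoothing in \cite{KoeRen07,Renner05} can be taken within the set of cq-states.
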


Since bitwise sampling is generally better than blockwise sampling, it seems that the results of both Theorem~\ref{thm:KR1} and \ref{thm:KR2} should also hold if the subset is sampled bitwise uniformly. The following theorem shows that this is indeed the case.

\begin{theorem}\label{thm:bitwise}
The bound of Theorem~\ref{thm:KR1} and \ref{thm:KR2} also apply if the sample is chosen bitwise uniformly.
\end{theorem}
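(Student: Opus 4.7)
The strategy is to reduce bitwise uniform sampling to the blockwise sampling of Theorem~\ref{thm:KR1} (and the recursive sampling of Theorem~\ref{thm:KR2}) by composing the existing sampling procedure with a uniformly random permutation of the positions of $X$. First I would introduce a permutation $\pi$ of $[n]$, drawn uniformly and independently of $(X,Q)$, and set $Y_i := X_{\pi(i)}$. Since $\pi$ is classical and independent of $(X,Q)$ while acting bijectively on positions, one has $H_{\min}(Y \mid \pi, Q)_\rho = H_{\min}(X \mid Q)_\rho$ and $H_0(Y) = H_0(X)$, and the analogous identities hold for any induced substring.

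Next I would apply Theorem~\ref{thm:KR1} (respectively Theorem~\ref{thm:KR2}) to the cq-state $\rho_{Y \pi Q}$, treating $(\pi, Q)$ as the adversary's side information. The existing theorem produces, from a uniformly chosen block-subset $T$ (respectively a uniform seed for $\ReSamp$), a substring $Y_T$ (respectively $Z$) together with the stated smooth min-entropy bound. The key observation is that, for every fixed $T$ (or seed), the image under $\pi$ of the set of bit positions selected is, by the symmetry of $\pi$, a uniformly random subset of $[n]$ of the corresponding size; averaging over $T$ preserves this. Hence the sample induced on $X$ is distributed exactly as a uniformly random bitwise subset of the same size; denote the resulting subset of $X$-positions by $S'$.

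To conclude, I would chain the bounds. Since $Y_T$ and $X_{S'}$ consist of the same bits reordered according to $(\pi,T)$, one determines the other once $(\pi,T)$ is given, so
\[H^{\eps}_{\min}(X_{S'} \mid S', Q)_\rho \geq H^{\eps}_{\min}(X_{S'} \mid \pi, T, Q)_\rho = H^{\eps}_{\min}(Y_T \mid \pi, T, Q)_\rho \;,\]
where the inequality uses monotonicity of smooth min-entropy under additional classical conditioning and the equality uses the deterministic bijection between $X_{S'}$ and $Y_T$ given $(\pi,T)$. The right-hand side is exactly the quantity bounded by Theorem~\ref{thm:KR1} (or~\ref{thm:KR2}) applied to $\rho_{Y \pi Q}$; because $H_0(X)=H_0(Y)$, $H_0(X_{S'})=H_0(Y_T)$, and the parameters $\kappa,\xi,r,f,\eps$ depend only on these bit-lengths, the resulting bound matches the one stated in the original theorem verbatim. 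The main subtle point will be verifying the monotonicity of smooth min-entropy under additional classical conditioning and carefully justifying that $S'$ is uniformly distributed over subsets of the correct size; both are routine counting and smoothing arguments, but should be spelled out.
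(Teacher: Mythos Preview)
Your proposal is correct and follows essentially the same approach as the paper: both introduce a uniformly random permutation of the bit positions so that the structured (blockwise or recursive) sample of the permuted string corresponds to a bitwise uniform sample of the original string, then invoke monotonicity of smooth min-entropy under extra classical conditioning together with permutation-invariance of $H_{\min}$. The only cosmetic difference is that the paper couples an independent uniform $S$ with the structured sample $T$ via a permutation $\Pi$ mapping $S$ to $T$, whereas you draw $\pi$ first and set $S'=\pi(T)$; these yield the same joint distribution and the same chain of inequalities.
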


\begin{proof} 
Let $k, n \in \mathbb{N}$, were $k < n$. Let $\rho_{X Q}$ be a cq-state where $X \in \{0,1\}^{n}$. Let $S \subset [n]$ be chosen uniformly at random from all subset of size $k$ and let $T \subset [n]$ be a random subset of size $k$ chosen according to a given distribution $P_T$. Let $\Pi$ a permutation chosen uniformly at random, but such that it maps all elements in $S$ into $T$.
Strong subadditivity (Theorem~3.2.12 in \cite{Renner05}) implies
\begin{align*}
 H^\eps_{\min}(X_{S} \mid S, Q )
 &\geq H^\eps_{\min}(X_{S} \mid  S, \Pi, Q ) \\
 &= H^\eps_{\min}(\Pi(X)_{T} \mid  T, \Pi, Q ) \;.
\end{align*}
Note that from $(S,\Pi)$ it is possible to calculate $(T,\Pi)$, and vice-versa.
Furthermore, since $\Pi$ is chosen independent of $\rho_{XQ}$, we have
 \[H^\eps_{\min}(\Pi(X) \mid \Pi, Q) = H^\eps_{\min}(X \mid \Pi, Q) = H^\eps_{\min}(X \mid Q)\;.\]
Since $S$ was chosen uniformly and independent of $T$ and $\rho_{XQ}$, $\Pi$
is independent of $T$ and $\rho_{XQ}$.
Setting $Q' := (Q,\Pi)$, we can apply Theorem~\ref{thm:KR1} or \ref{thm:KR2} to the state $\rho_{\Pi(X) Q'}$, and get a bound on $H^\eps_{\min}(\Pi(X)_{T} \mid  T, \Pi, Q )$, which then directly implies the same bound for $H^\eps_{\min}(X_{S} \mid S, Q )$.
\end{proof}

\section{A Sampling Theorem from Quantum Bit Extractors} \label{sec:mainthm}

In this section we give a new min-entropy sampling theorem using a completely different approach than \cite{KoeRen07}. Our proof has two steps. First, we show a bound on the guessing probability of
the XOR of a randomly chosen substring of $X$ using results from \cite{DeVid10}, which are based on \emph{strong quantum extractors}. Second, we will show that this implies a bound on the guessing probability of a randomly chosen substring of $X$. To show this we use a similar approach as the proof of Theorem 2 in \cite{BeReWo08}.


 A function $\ext: \{0,1\}^n \times \{0,1\}^d \rightarrow \{0,1\}^m$ is a \emph{$(\ell,\eps)$-strong extractor against quantum adversaries}, if for all states $\rho_{XQ}$ that are classical on $X$ with $H_{\min}(X \mid Q)_\rho \geq \ell$ and for a uniform seed $R$, we have
$D(\rho_{\ext(X,R)RQ},\tau_U \otimes \rho_R \otimes \rho_Q) \leq \eps$,
where $\tau_U$ is the fully mixed state.
A strong \emph{classical} extractor is the same, but with a trivial system $Q$. If $m=1$, we call it a bit-extractor. K\"onig and Terhal showed in \cite{KoeTer08} that any classical bit-extractor is also a quantum bit-extractor.

\begin{theorem}[Theorem III.1 in \cite{KoeTer08}] \label{thm:KT}
 Any $(\ell, \eps)$-strong bit-extractor is a $(\ell + \log 1/\eps, 3 \sqrt{\eps})$-strong bit-extractor against quantum adversaries.
\end{theorem}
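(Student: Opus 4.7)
The plan is to reduce the quantum bit-extractor property to the classical one, spending the extra $\log(1/\eps)$ of min-entropy to absorb a quadratic loss. By the preceding Lemma it is enough to show that no POVM-based strategy can guess $B:=\ext(X,R)$ from $(R,Q)$ with probability exceeding $\tfrac{1}{2}+3\sqrt{\eps}$. Fix any such strategy, given by a family of two-outcome POVMs $\{M_r,I-M_r\}_r$ on $Q$. The Helstrom identity recasts the guessing advantage as $\tfrac{1}{2}\mathbb{E}_R[\tr(N_R\Delta_R)]$, where $N_r:=2M_r-I\in[-I,I]$ and $\Delta_r:=\sum_x P_X(x)(-1)^{\ext(x,r)}\rho_Q^x$, so the task becomes to bound $\bigl|\mathbb{E}_R\tr(N_R\Delta_R)\bigr|\leq 6\sqrt{\eps}$.

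The natural first move is to separate the $x$-dependence of the measurement value from the $r$-dependence of the extractor, writing
\[
 \tr(N_r\rho_Q^x)=\tr(N_r\rho_Q)+\tr\bigl(N_r(\rho_Q^x-\rho_Q)\bigr).
\]
The first (``mean'') summand yields $\mathbb{E}_{X,R}\bigl[(-1)^{\ext(X,R)}h(R)\bigr]$ with $h(r):=\tr(N_r\rho_Q)\in[-1,1]$, which the classical $(\ell,\eps)$-extractor bounds by $2\eps$ since $P_X$ has min-entropy $\geq\ell+\log(1/\eps)\geq\ell$. The second (``fluctuation'') summand is where we must pay the $\sqrt{\eps}$ price. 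Applying Cauchy--Schwarz twice (first over $x$, then over $R$) reduces the fluctuation term in absolute value to $\sqrt{\mathbb{E}_R\,\mathrm{Var}_X\bigl[\tr(N_R\rho_Q^X)\bigr]}$.

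To control this variance we exploit the following consequence of the min-entropy hypothesis: for every POVM element $0\leq F\leq I$ and every $x$,
\[
 P_X(x)\,\tr(F\rho_Q^x)\;\leq\;\eps\cdot 2^{-\ell}.
\]
This follows from the guessing-probability characterisation of $H_{\min}(X\mid Q)$, by using $\{F,I-F\}$ as the first step of an $X$-guessing POVM and then guessing a distinguished $x$ on outcome $F$. Inserted into the variance computation at the appropriate step, this $L^\infty$-type bound should give $\mathbb{E}_R\,\mathrm{Var}_X[\tr(N_R\rho_Q^X)]=O(\eps)$; taking a square root then yields the fluctuation bound $O(\sqrt{\eps})$, and combining with the $O(\eps)$ mean contribution together with the factor $1/2$ from the Helstrom identity and the factor $2$ from the trace-distance/trace-norm conversion produces the stated $3\sqrt{\eps}$.

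The main obstacle will be this variance estimate: a naive Cauchy--Schwarz only gives the trivial $O(1)$ bound, and making the min-entropy consequence bite requires inserting it at the correct point in the chain of inequalities so that the scaling $\eps\cdot 2^{-\ell}$ is not spoiled by a factor like the support size of $P_X$. Everything else in the outline---the Helstrom decomposition, the mean-versus-fluctuation split, the classical extractor application, and the tracking of constants---is routine.
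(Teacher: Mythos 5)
First, note that the paper does not prove this statement itself; it is quoted verbatim from K\"onig and Terhal \cite{KoeTer08}, so your proposal has to be judged on its own merits.

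The setup (Helstrom form of the advantage, reduction to bounding $|\mathbb{E}_R\tr(N_R\Delta_R)|$, and the treatment of the ``mean'' term via the classical extractor property) is fine, but the proposal has a fatal gap exactly at the step you flag as the ``main obstacle'': the claimed estimate $\mathbb{E}_R\,\mathrm{Var}_X[\tr(N_R\rho_Q^X)]=O(\eps)$ is simply \emph{false}, and no placement of the inequality $P_X(x)\tr(F\rho_Q^x)\leq\eps\,2^{-\ell}$ can rescue it. Concretely, let $X$ be uniform on $\{0,1\}^n$ and let $Q$ be a single \emph{classical} bit $f(X)$ for a balanced $f$; then $H_{\min}(X\mid Q)=n-1$, so the hypothesis $H_{\min}(X\mid Q)\geq\ell+\log(1/\eps)$ holds with room to spare, yet for the measurement $N=\ketbra{0}{0}-\ketbra{1}{1}$ one has $\tr(N\rho_Q^x)=(-1)^{f(x)}$ and hence $\mathrm{Var}_X[\tr(N\rho_Q^X)]=1$. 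The problem is structural: applying Cauchy--Schwarz to the fluctuation term discards the oscillating sign $(-1)^{\ext(x,r)}$, and that sign is the \emph{only} reason the fluctuation term is small --- the variance itself is generically $\Theta(1)$ whenever the side information is nontrivially correlated with $X$, which is perfectly compatible with near-maximal conditional min-entropy. Your pointwise bound $P_X(x)\tr(F\rho_Q^x)\leq\eps 2^{-\ell}$ (which is correct) controls individual summands but the variance is a sum of up to $2^n$ of them, so it only yields the trivial bound.

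The correct route, and the one underlying \cite{KoeTer08}, keeps the classical extractor property in play for the $x$-dependent part rather than only for the mean: one views the measurement outcome as classical side information $Y$, conditions on $Y=y$, observes that $P_{X\mid Y=y}$ still has min-entropy at least $\ell$ except on a set of outcomes of probability at most $\eps$ (this Markov step is where the $\log(1/\eps)$ entropy loss is spent), and applies the $(\ell,\eps)$ guarantee to each conditional distribution; the $\sqrt{\eps}$ arises from the quantum-to-classical step relating the trace-norm advantage to the advantage of a measured version. In your counterexample this is exactly what works: $\mathbb{E}_{x,r}[(-1)^{\ext(x,r)}(-1)^{f(x)}]$ is small because each of $P_{X\mid f(X)=0}$ and $P_{X\mid f(X)=1}$ is a fixed high-min-entropy distribution to which the extractor guarantee applies --- precisely the information your variance bound throws away.
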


One way to construct a strong bit-extractor is to use a \emph{$(\eps, \delta, L)$-approximately list-decodable code}, which is a code $C: \{0,1\}^n \rightarrow \{0,1\}^{m}$ where for every $c' \in \{0,1\}^{m}$ there exist $L$ strings $c_1, \dots, c_L \in \{0,1\}^n$, such that for any string $x \in \{0,1\}^n$ satisfying $d_H(c', C(x)) < (\frac12 - \eps) m$, there exists an $i \in \{1,\dots,L\}$ such that $d_H(c', c_i) \leq \delta m$.
From a code $C: \{0,1\}^n \rightarrow \{0,1\}^{2^t}$, we can build a bit-extractor $\ext: \{0,1\}^n \times \{0,1\}^t \rightarrow \{0,1\}$ as
$\ext(x,y) := C(x)_y$,
where $C(x)_y$ is the $y$th position of the codeword $C(x)$.

\begin{lemma}[Claim 3.7 in \cite{DeVid10}] \label{lem:37}
Let $\delta \in [0,\frac12]$. An extractor build from a $(\eps,\delta,L)$-approximately list-decodable code $C: \{0,1\}^n \rightarrow \{0,1\}^{2^t}$  is a $(\ell,\eps)$-strong classical bit-extractor for 
$\ell > H(\delta)n + \log L + \log 2/\eps$.
\end{lemma}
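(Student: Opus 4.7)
The plan is to prove the contrapositive: if $\ext(x,y) := C(x)_y$ fails to be an $(\ell,\eps)$-strong classical bit-extractor, then $\ell \leq H(\delta)n + \log L + \log(2/\eps)$. Unpacking the failure, we obtain a distribution on $X$ with $H_{\min}(X) \geq \ell$ and a distinguisher of advantage strictly exceeding $\eps$. Because the output is a single bit, the standard distinguisher-to-predictor reduction converts this into a deterministic function $f : \{0,1\}^t \to \{0,1\}$ with
\[ \Pr_{X,Y}[f(Y) = C(X)_Y] > \tfrac12 + \eps, \]
where $Y$ is uniform on $\{0,1\}^t$.

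The second step is to reinterpret the predictor as a received word. Setting $c' \in \{0,1\}^{2^t}$ to have $y$th bit $f(y)$, the identity $\Pr_Y[c'_Y = C(x)_Y] = 1 - d_H(c',C(x))/2^t$ recasts the prediction bound as $\mathbb{E}_X[\tfrac12 - d_H(c',C(X))/2^t] > \eps$. A short Markov-type averaging on this inequality isolates a ``heavy'' set
\[ A := \bigl\{x \in \{0,1\}^n : d_H(c',C(x)) < (\tfrac12-\eps)\cdot 2^t\bigr\} \]
of $P_X$-mass at least roughly $\eps/2$: the well-predicted $x$'s contribute at most $\tfrac12$ each to the expectation, while the remaining $x$'s contribute at most $\eps$, and balancing the two forces a non-trivial share to land inside $A$.

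The third step applies the approximate list-decoding hypothesis to $c'$: it produces $c_1, \ldots, c_L \in \{0,1\}^n$ such that every $x \in A$ lies within Hamming distance $\delta n$ of some $c_i$ (reading the definition with the obvious correction that the metric in the message space is used). Because $\delta \leq \tfrac12$, the standard volume bound for Hamming balls yields $|A| \leq L \cdot 2^{H(\delta)n}$. Combining with the min-entropy bound $P_X(x) \leq 2^{-\ell}$ gives
\[ \tfrac{\eps}{2} \leq \Pr_X[X \in A] \leq |A| \cdot 2^{-\ell} \leq L \cdot 2^{H(\delta)n - \ell}, \]
which on taking logarithms rearranges to $\ell \leq H(\delta)n + \log L + \log(2/\eps)$, the contrapositive of the lemma.

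The delicate step is the averaging. A naive split on whether an $x$ satisfies $d_H(c',C(x)) < (\tfrac12-\eps)\cdot 2^t$ or not only yields $\Pr_X[X \in A] > 0$ when the threshold is set exactly at the list-decoding radius; to get a genuinely quantitative lower bound on the mass one must either work at a slightly smaller threshold and absorb the loss, or exploit the asymmetric range of the random variable $\tfrac12 - d_H(c',C(X))/2^t$ more carefully. This balancing is the source of the factor $2$ in $\log(2/\eps)$, and it is the place where I would expect to spend the most care tracking constants.
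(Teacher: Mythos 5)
The paper does not prove this lemma---it is imported verbatim as Claim 3.7 of \cite{DeVid10}---so I am judging your argument on its own. Your architecture is the standard (and surely the intended) one: contrapositive, distinguisher-to-predictor for a single output bit (which is lossless), predictor viewed as a received word $c'$, list-decoding of $c'$, the Hamming-ball volume bound $L\cdot 2^{H(\delta)n}$ for $\delta\le\frac12$, and a counting argument against $P_X(x)\le 2^{-\ell}$. All of those steps are fine.

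The gap is exactly the one you flag, and it is not repairable in the form you propose. Writing $Z(x):=\Pr_Y[f(Y)=C(x)_Y]-\frac12=\frac12-d_H(c',C(x))/2^t$, the failure hypothesis gives only $\mathbb{E}_X[Z]>\eps$, and your set $A$ is $\{Z>\eps\}$. Since $Z\le\frac12$ everywhere and $Z\le\eps$ off $A$, the split yields $\eps<\mathbb{E}[Z]\le\frac12\Pr[A]+\eps(1-\Pr[A])$, i.e.\ only $\Pr[A]>0$; and this is tight: the distribution with mass $1-q$ at $Z=\eps$ and mass $q$ at $Z=\frac12$ satisfies $\mathbb{E}[Z]>\eps$ for every $q>0$, so no quantitative lower bound on $\Pr[A]$ at the threshold $(\frac12-\eps)2^t$ exists. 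Your suggested fix---lowering the threshold---does give mass: from $\mathbb{E}[Z]>\eps$ and $Z\le\frac12$ one gets $\Pr[Z>\eps/2]>\eps$. But the set $\{Z>\eps/2\}$ consists of $x$ with $d_H(c',C(x))<(\frac12-\eps/2)2^t$, and covering \emph{that} set by $L$ balls requires the code to be $(\eps/2,\delta,L)$-approximately list-decodable, which is a strictly stronger hypothesis than the $(\eps,\delta,L)$ assumed in the statement. So what your argument actually proves is ``$(\eps/2,\delta,L)$-approximately list-decodable $\Rightarrow$ $(\ell,\eps)$-strong bit-extractor for $\ell>H(\delta)n+\log L+\log(1/\eps)$,'' a constant-factor-degraded version of the lemma (harmless for the way Lemma~\ref{lem:3} and Theorem~\ref{thm:main} consume it, where the constants have ample slack, but not the statement as given). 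To prove the stated constants you would need either a different accounting or a slightly different reading of the definitions in \cite{DeVid10}; as written, the ``delicate step'' you identify is a genuine hole, not just a bookkeeping chore.
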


The $(n,k)$-XOR-code over strings of length $n$ is the code where the string $x$ gets encoded into a string of size $\binom n k$ where each bit is the XOR of a subset of $x$ of size $k$.

\begin{lemma}[Lemma 42 in \cite{ImJaKa06}, adapted in \cite{DeVid10}, Lemma 3.11] \label{lem:42}
 For $\eps > 2 k^2 / 2^n$, the $(n,k)$-XOR-code is a $(\eps, \frac 1 k \ln \frac 2 \eps, 4/\eps^2)$-approximately list-decodable code.
\end{lemma}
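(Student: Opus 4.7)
The plan is to translate the list-decoding requirement into a bound on the size of a pairwise-separated family of ``candidate'' messages and then run a second-moment / Cauchy--Schwarz argument. Call $x \in \{0,1\}^n$ a \emph{candidate} if $d_H(c', C(x)) < (\tfrac12 - \eps)\binom{n}{k}$, equivalently if $C(x)$ agrees with the received word $c'$ on more than a $\tfrac12 + \eps$ fraction of coordinates. By a standard greedy covering argument, it will suffice to bound the size $M$ of any set of candidates $x_1, \dots, x_M$ that are pairwise at Hamming distance strictly greater than $\delta n$: the balls of radius $\delta n$ around such a maximal packing automatically cover every candidate, so one may take the $c_i$ to be the $x_i$ themselves.

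To bound $M$, I would pass to the Fourier-like functions $f_x : \binom{[n]}{k} \to \{\pm 1\}$ defined by $f_x(S) := (-1)^{\bigoplus_{i \in S} x_i}$, together with $\hat c'(S) := (-1)^{c'_S}$. The candidate condition becomes $\mathbb{E}_S[f_x(S) \hat c'(S)] > 2\eps$ for a uniformly random $k$-subset $S$. Summing over $i$ and applying Cauchy--Schwarz with $\|\hat c'\|_2 = 1$ gives
\[ 2 \eps M < \Bigl\langle \sum_i f_{x_i}, \hat c' \Bigr\rangle \leq \Bigl\| \sum_i f_{x_i} \Bigr\|_2 , \]
and squaring, together with the identity $f_{x_i} f_{x_j} = f_{x_i \oplus x_j}$, yields
\[ 4 \eps^2 M^2 < M + \sum_{i \neq j} \mathbb{E}_S[f_{x_i \oplus x_j}(S)] . \]
For any $y$ of Hamming weight $w > \delta n$ a short hypergeometric calculation gives $|\mathbb{E}_S[f_y(S)]| \leq (1 - 2w/n)^k \leq e^{-2k\delta}$, and with $\delta = \tfrac1k \ln(2/\eps)$ this equals $\eps^2/4$. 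Plugging in gives $4\eps^2 M^2 < M + M(M-1) \eps^2/4$, which forces $M \leq 4/\eps^2$, as required.

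The main obstacle is the bookkeeping in the move from the independent-sample intuition to the true hypergeometric law for a uniform $k$-subset: the exact identity $\mathbb{E}_S[(-1)^J] = \sum_{j} (-1)^j \binom{w}{j}\binom{n-w}{k-j} / \binom{n}{k}$, where $J$ counts indices $i \in S$ with $y_i = 1$, only recovers the clean bound $(1 - 2w/n)^k$ up to a lower-order correction, and the hypothesis $\eps > 2k^2 / 2^n$ is exactly what absorbs that residual error. A secondary item is to make sure the constant in the list-size bound lines up to give exactly $4/\eps^2$ rather than a slightly larger value; the slack in the exponential estimate above leaves enough room for this. Once these technicalities are handled, the Cauchy--Schwarz step and the greedy packing-to-covering reduction compose to give the claimed $(\eps, \tfrac1k \ln(2/\eps), 4/\eps^2)$-approximate list-decodability.
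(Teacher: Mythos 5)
The paper does not actually prove Lemma~\ref{lem:42}: it is imported verbatim from Lemma~42 of \cite{ImJaKa06} as adapted in Lemma~3.11 of \cite{DeVid10}, so there is no in-paper argument to compare against. Judged on its own, your sketch is the standard Johnson-bound/second-moment route, and the packing-to-covering reduction and the Cauchy--Schwarz step are sound (they would in fact give $M < 4/(15\eps^2)$, leaving slack). The gap sits in the one estimate that carries all the content: the bound on the cross terms $\mathbb{E}_S[f_y(S)]$ for a uniform $k$-subset $S$ and $y = x_i \oplus x_j$.

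There are two problems there. (i) The chain $|\mathbb{E}_S[f_y(S)]| \le (1-2w/n)^k \le e^{-2k\delta}$ is false for $w > n/2$: for $y$ of weight near $n$ the expectation is near $(-1)^k$. This is not a removable technicality. For even $k$ the XOR code satisfies $C(x) = C(\bar x)$, so whenever $x$ is a candidate its complement is too; the pair is at distance $n > \delta n$, so both survive in your packing, and their cross term equals $+1$, which destroys the inequality $4\eps^2 M^2 < M + M(M-1)\eps^2/4$. You must pack with respect to $\min(d_H(x,y),\, n-d_H(x,y))$ (equivalently, restrict the cross terms to $\delta n \le w \le (1-\delta)n$) and then double the list; your constants happen to leave room for the doubling, but the step is missing. (ii) Even for $\delta n < w \le n/2$, the inequality $|\mathbb{E}_S[(-1)^{|S\cap \supp(y)|}]| \le (1-2w/n)^k$ fails for sampling without replacement (e.g.\ $n=4$, $w=2$, $k=2$ gives $1/3$ versus $0$); the quantity is a normalized Krawtchouk value, not a product. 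Your assertion that the hypothesis $\eps > 2k^2/2^n$ ``absorbs the residual error'' is unsubstantiated and implausible: the natural coupling between sampling with and without replacement costs an additive $O(k^2/n)$, and feeding that into your final inequality would force a hypothesis of the form $\eps = \Omega(k/\sqrt{n})$, vastly stronger than $\eps > 2k^2/2^n$. So either you supply a genuinely sharper bound on $K_k(w)/\binom{n}{k}$ throughout $[\delta n,(1-\delta)n]$, or your route does not recover the lemma with the stated parameters; as written, the proof is incomplete at its central step.
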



Combining Lemmas~\ref{lem:37} and \ref{lem:42} with Theorem~\ref{thm:KT}, we get the following lemma.

\begin{lemma} \label{lem:3}
 Let $\eps > 2 k^2 / 2^n$ and $k \geq 2 \ln \frac 2 \eps$. The extractor build from the $(n,k)$-XOR-code implies a $(\ell, 3 \sqrt{\eps})$-strong bit-extractor against quantum adversaries for
\[ \ell > H \Big (\frac 1 k \ln \frac 2 \eps \Big ) n + 4 \log \frac 1 \eps + 3\;.\]
\end{lemma}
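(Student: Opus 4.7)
The plan is to simply chain the three cited results in the order Lemma~\ref{lem:42} $\to$ Lemma~\ref{lem:37} $\to$ Theorem~\ref{thm:KT}, and then simplify the resulting min-entropy bound.

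First, I would invoke Lemma~\ref{lem:42}: under the hypothesis $\eps > 2k^2/2^n$, the $(n,k)$-XOR-code is a $(\eps, \delta, L)$-approximately list-decodable code with $\delta = \frac{1}{k}\ln\frac{2}{\eps}$ and $L = 4/\eps^2$. Before feeding this into Lemma~\ref{lem:37}, I need $\delta \in [0,\frac12]$; this is exactly where the second hypothesis $k \geq 2\ln\frac{2}{\eps}$ is used, since it gives $\frac{1}{k}\ln\frac{2}{\eps} \leq \frac12$. The codeword length is $\binom{n}{k}$, so writing $2^t = \binom{n}{k}$, the construction $\ext(x,y) := C(x)_y$ described in the text yields a bit-extractor with an appropriate seed length.

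Second, I would apply Lemma~\ref{lem:37} to obtain that this $\ext$ is a $(\ell_0,\eps)$-strong classical bit-extractor for
\[ \ell_0 > H\Big(\tfrac{1}{k}\ln\tfrac{2}{\eps}\Big) n + \log(4/\eps^2) + \log(2/\eps)\;.\]
Collecting the constants, $\log(4/\eps^2) + \log(2/\eps) = 3\log(1/\eps) + 3$, so the classical bound becomes $\ell_0 > H\big(\tfrac{1}{k}\ln\tfrac{2}{\eps}\big) n + 3\log(1/\eps) + 3$.

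Third, I would apply Theorem~\ref{thm:KT} to promote this classical bit-extractor to a quantum one: it yields a $(\ell_0 + \log(1/\eps),\, 3\sqrt{\eps})$-strong bit-extractor against quantum adversaries. Absorbing the extra $\log(1/\eps)$ into the constant term gives the required
\[ \ell > H\Big(\tfrac{1}{k}\ln\tfrac{2}{\eps}\Big) n + 4\log(1/\eps) + 3 \;,\]
with error $3\sqrt{\eps}$, as claimed. There is no real obstacle here; the only thing to be careful about is keeping track of the two side conditions ($\eps > 2k^2/2^n$ from Lemma~\ref{lem:42} and $k \geq 2\ln\frac{2}{\eps}$ to ensure $\delta \leq \tfrac12$), and bookkeeping of the $\log$-terms when the classical error $\eps$ is turned into the quantum error $3\sqrt{\eps}$ via Theorem~\ref{thm:KT}.
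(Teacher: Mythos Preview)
Your proposal is correct and follows exactly the paper's own proof: chain Lemma~\ref{lem:42} into Lemma~\ref{lem:37} to get the classical bound $\ell_0 > H\big(\tfrac{1}{k}\ln\tfrac{2}{\eps}\big)n + 3\log(1/\eps) + 3$, then apply Theorem~\ref{thm:KT} to add the extra $\log(1/\eps)$ and obtain the quantum error $3\sqrt{\eps}$. You even make explicit the role of the hypothesis $k \geq 2\ln\tfrac{2}{\eps}$ in ensuring $\delta \leq \tfrac12$, which the paper leaves implicit.
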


\begin{proof}
Using Lemma~\ref{lem:37} and \ref{lem:42}, the $(n,k)$-XOR-code implies a $(\ell,\eps)$-strong classical bit-extractor for
\begin{align*}
 \ell > H \Big (\frac 1 k \ln \frac 2 \eps \Big ) n + \log \frac 4 {\eps^2} + \log \frac 2 \eps
   = H \Big (\frac 1 k \ln \frac 2 \eps \Big ) n + 3 \log \frac 1 \eps + 3\;.
\end{align*}
The statement follows from Theorem~\ref{thm:KT}.
\end{proof}

From Lemma \ref{lem:3} follows that that if a string $X$ can only be guessed from $Q$ with probability at most $2^{-\ell}$, i.e., $H_{\min}(X \mid Q) \geq \ell$, then the XOR of a random subset of size $k$ can be guessed with probability at most $1/2 + 3 \sqrt{\eps}$. 

The following lemma gives a bound on the probability to guess a whole substring, given bounds on the probability to guess the XOR of substrings. It has been proven as a part of Theorem 2 in \cite{BeReWo08}. For clarity, we include the proof here.

\begin{lemma}[part of Theorem 2 in \cite{BeReWo08}] \label{lem:BRW}
Let $\rho_{X Q}$ be a cq-state where $X \in \{0,1\}^n$ and let $p_0, \dots, p_k > 0$ be upper bounds on the probability to guess the XOR of a random subset of $X$ of size $j$ given $Q$ and the subset. 
Then the probability to guess a random subset of $X$ of size $k$ from $Q$ and the subset is at most
\[ \frac 1 {2^k} \sum_{j=0}^k \binom k j (2 p_j - 1)\;.\]
\end{lemma}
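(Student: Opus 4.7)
The plan is to expand the indicator of success in the Fourier basis on $\{0,1\}^k$ and bound each character term using the parity hypothesis. Fix any strategy that, given $T$ of size $k$ and $Q$, outputs a guess $Y \in \{0,1\}^k$ for $X_T$. Applying the identity
\[
\mathbf{1}[Y = X_T] \;=\; \frac{1}{2^k}\sum_{s \in \{0,1\}^k} (-1)^{s \cdot (Y \oplus X_T)}
\]
and taking expectations, I obtain
\[
\Pr[Y = X_T] \;=\; \frac{1}{2^k}\sum_{j=0}^{k} \sum_{s : |s|=j}\bigl(2 \Pr[s \cdot Y = s \cdot X_T] - 1\bigr),
\]
where the $s$'s are grouped by Hamming weight $j=|s|$.

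I would then reinterpret each inner sum as an averaged parity-guessing success probability. For $s$ of weight $j$ we have $s \cdot X_T = \bigoplus_{i \in T_s} X_i$, where $T_s := \{T_i : s_i = 1\} \subseteq T$. When $T$ is uniform among size-$k$ subsets of $[n]$ and $s$ is independently uniform of weight $j$, the set $T_s$ is uniform among size-$j$ subsets of $[n]$, and $(T \setminus T_s, s)$ may be regarded as additional independent randomness. Hence the procedure ``given a size-$j$ subset $T'$, sample a uniform extension $T \supseteq T'$ of size $k$ together with a uniform weight-$j$ indicator $s$ satisfying $T_s = T'$, query the original strategy on $(T,Q)$, and output $s \cdot Y$'' is a legitimate strategy for guessing the XOR of a uniformly random size-$j$ subset of $X$ given $Q$ and the subset; its success probability is at most $p_j$ by hypothesis. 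Hence
\[
\sum_{s : |s|=j} \Pr[s \cdot Y = s \cdot X_T] \;\leq\; \binom{k}{j} p_j,
\]
so the $j$-th inner sum contributes at most $\binom{k}{j}(2 p_j - 1)$. The $j=0$ case is trivial since the empty XOR is deterministically $0$, forcing $p_0 = 1$. Summing over $j$ gives the stated bound.

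The only delicate step is the symmetry observation: uniform $T$ and an independent uniform weight-$j$ vector $s$ together induce a uniform size-$j$ subset $T_s \subseteq [n]$, with the leftover data independent of $(X,Q)$. Once this is verified, the rest is routine Fourier bookkeeping, and the only clever ingredient is the choice to ``absorb'' the specific position pattern $s$ into the strategy's random tape so as to match the hypothesis of the lemma.
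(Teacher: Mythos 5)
Your proposal is correct and follows essentially the same route as the paper: both expand the success probability via the Fourier characters $\chi_s(w)=(-1)^{s\cdot w}$ of the error string $w=Y\oplus X_T$, group the characters by Hamming weight $j$, and observe that (uniform $T$, uniform weight-$j$ pattern $s$) induces a uniform size-$j$ subset $T_s$ so that each weight-$j$ character average is the bias $2p_j'-1\le 2p_j-1$ of a legitimate XOR-guessing strategy obtained by sampling a random size-$k$ extension of the given size-$j$ set. The symmetry step you flag as delicate is exactly the paper's passage from $\sum_{t,s}P_{TS}(t,s)$ to $\sum_j P_J(j)\sum_s P_{S\mid J=j}(s)\sum_t P_{T\mid S=s}(t)$, and your treatment of it is sound.
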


\begin{proof}
Let $P_T$ be the uniform distribution among all subsets of $[n]$ of size $k$, and let $t$ be distributed according to $P_T$. Let $P_{S \mid T=t}$ be the distribution that chooses a random subset of $t$. This defines the joint distribution $P_{ST}$, as well as the distributions $P_S$ and $P_{T \mid S=s}$. Let 
\[ P_J(j) :=  \frac {1} {2^k} \binom k j\;,\]
for $j \in \{0, \dots, k\}$. $P_J(j)$ is the probability that the subset $s$ has size $j$.
We have $P_S(s) = \sum_j P_J(j) P_{S \mid J=j}(s)$, where $P_{S \mid J=j}(s)$ is the uniform distribution over the subsets of $[n]$ of size $j$.

For any $t$, let $\mE_t$ be a POVM on $\mQ$ that guesses $X$ for the subset $t$, for $t$ chosen according to $P_T$. For any $t$, this defines a distribution $P_{W \mid T=t}$ over error-strings $w \in \{0,1\}^k$, where $w=0^k$ means that the guess was correct.
For $s \subset [k]$, let
\begin{align*}
 Q_{S \mid T=t}(s) := \frac{1}{2^k} \sum_{w \in \{0,1\}^k} P_{W \mid T=t}(w) \chi_s(w)\;,
\end{align*}
where $\chi_s(x) := (-1)^{x \cdot s}$, i.e., it is the parity of the bits of $w$ indexed by $s$.
$Q_{S \mid T=t}$ is the Fourier-transform of $P_{W \mid T=t}$, so we also have
\begin{align*}
 P_{W \mid T=t}(w) = \sum_{s \subseteq [k]} Q_{S \mid T=t}(s) \chi_s(w)\;.
\end{align*}
Using all subsets of $t$ as the domain of $s$, and since $\chi_s(0) = 1$ for all $s$ we can write
\begin{align*}
P_{W \mid T=t}(0)
 = \sum_{s \subseteq t} Q_{S \mid T=t}(s) \chi_s(0)
 = \sum_{s \subseteq t} \frac{1}{2^k} \sum_{w \in \{0,1\}^k} P_{W \mid T=t}(w) \chi_s(w)\;.
\end{align*}

Let $p$ be the maximal probability to guess a subset $t$ distributed according to $P_T$.
Note that for $s \subseteq t$, we have $P_{S \mid T=t}(s) = 2^{-k}$. We get
\begin{align*}
p
& = \sum_t P_T(t) P_{W \mid T=t}(0) \\
& =  \sum_t P_T(t) \sum_{s \subseteq t} \frac{1}{2^k} \sum_w P_{W \mid T=t}(w) \chi_s(w)  \\
& = \sum_t P_T(t) \sum_{s \subseteq t} P_{S \mid T=t}(s) \sum_w P_{W \mid T=t}(w) \chi_s(w)  \\
& =  \sum_{t,s} P_{TS}(t,s) \sum_w P_{W \mid T=t}(w) \chi_s(w)  \\
& =  \sum_{j} P_J(j) \sum_{s} P_{S \mid J=j}(s) \sum_{t} P_{T \mid S=s}(t) \sum_w P_{W \mid T=t}(w) \chi_s(w)\;.
\end{align*}
Given a set $s$ of size $j$, let us apply the following algorithm to guess the XOR of the subset $s$ of $X$. We first sample $t$ according to $P_{T \mid S=s}(t)$, then apply the POVM $\mE_t$ to $Q$, and then output the XOR of the bits in $s$ of the outcome. 
The probability that this algorithm guesses correctly the XOR of a randomly chosen subset of size $j$ is
\[ \sum_s P_{S \mid J=j} \sum_{t} P_{T \mid S=s}(t) \sum_w P_{W \mid T=t}(w) \frac{1+\chi_s(w)} 2\;.\]
Since this must be upper bounded by $p_j$, it follows that
\[\sum_s P_{S \mid J=j}  \sum_{t} P_{T \mid S=s}(t) \sum_w P_{W \mid T=t}(w) \chi_s(w) \leq 2 p_j - 1\;.\]
So we get
\begin{align*}
 p  \leq  \sum_{j} P_{J}(j) (2 p_j - 1) =  \sum^k_{j=0} \frac {1} {2^k} \binom k j (2 p_j - 1)\;.
\end{align*}

\end{proof}

We can now use Lemmas \ref{lem:3} and \ref{lem:BRW} to proof our main result.

\begin{theorem} \label{thm:main}
Let a cq-state $\rho_{XQ}$ be given, where $X \in \{0,1\}^n$. Let $T$ be a random subset of $[n]$ of size $k$. If $\log \frac{1}{p} \leq k/12 - 5$ and
\[
 H_{\min}(X \mid Q)_\rho \geq 
H \left (\frac 6 k \log \frac {17} {p} \right ) n + 8 \log \frac {12} {p} + 3\;,
\]
then $H_{\min}(X_T \mid T Q)_\rho \geq \log \frac 1 p$.
\end{theorem}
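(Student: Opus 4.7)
My plan is to combine the two ingredients developed in this section. Lemma~\ref{lem:BRW} reduces the probability of guessing a random size-$k$ substring $X_T$ to a weighted sum of the probabilities $p_j$ of guessing the XOR of a random size-$j$ substring, summed over $j = 0, \dots, k$. Lemma~\ref{lem:3} in turn bounds each $p_j$ via the quantum bit-extractor built from the $(n,j)$-XOR-code, provided $j$ is large enough relative to $\eps$ and the min-entropy of $X$ given $Q$ exceeds the threshold $H(\tfrac{1}{j}\ln\tfrac{2}{\eps})\,n + 4\log(1/\eps) + 3$. Since Lemma~\ref{lem:3} yields a useful bound only once $j$ is a constant fraction of $k$, I would split the sum from Lemma~\ref{lem:BRW} at a threshold $j_0$ and use the trivial bound $p_j \leq 1$ below $j_0$.

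The parameter choices that make the algebra output exactly the constants in the statement are $\eps := p^2/144$ and $j_0 := \lceil k/4 \rceil$. With this $\eps$ we have $3\sqrt{\eps} = p/4$ and $4\log(1/\eps) = 8\log(12/p)$, which matches the additive term of the hypothesis on the nose. For every $j \geq j_0$, monotonicity of $H$ on $[0,\tfrac12]$ together with the arithmetic inequalities $8\ln 2 < 6$ and $4\ln 2 \cdot \log 288 < 6\log 17$ give
\[
H\!\left(\tfrac{1}{j}\ln\tfrac{2}{\eps}\right) \;\leq\; H\!\left(\tfrac{4}{k}\ln\tfrac{288}{p^2}\right) \;\leq\; H\!\left(\tfrac{6}{k}\log\tfrac{17}{p}\right),
\]
so the min-entropy premise of Lemma~\ref{lem:3} is implied by the theorem's hypothesis for all such $j$. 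The technical side conditions $\eps > 2j^2/2^n$ and $j \geq 2\ln(2/\eps)$ reduce to easy arithmetic using $k \geq 12\log(1/p) + 60$ (equivalently, $\log(1/p) \leq k/12 - 5$). Lemma~\ref{lem:3} then gives $p_j \leq \tfrac12 + p/4$ for every $j \in [j_0, k]$.

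Substituting these bounds into Lemma~\ref{lem:BRW}, with the trivial $p_j \leq 1$ for $j < j_0$, and invoking the standard entropy bound $\sum_{j < \alpha k}\binom{k}{j} \leq 2^{H(\alpha) k}$ for $\alpha \leq \tfrac12$ (equivalently, Lemma~\ref{lem:chernoff1}) yields
\[
P_{\guess}(X_T \mid T Q) \;\leq\; \frac{1}{2^k}\sum_{j < j_0}\binom{k}{j} + \frac{p/2}{2^k}\sum_{j \geq j_0}\binom{k}{j} \;\leq\; 2^{-k(1 - H(1/4))} + \frac{p}{2}.
\]
Since $1 - H(1/4) > 0.18$, the assumption $k \geq 12\log(1/p) + 60$ is comfortably enough to make the first term at most $p/2$, and we obtain $P_{\guess}(X_T \mid T Q) \leq p$ and hence $H_{\min}(X_T \mid T Q)_\rho \geq \log(1/p)$. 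The argument is essentially a plug-and-play application of two black-box lemmas; the real work, and the main obstacle, is choosing $\eps$ and $j_0$ so that the constants $144$, $12$, $17$, $6$ in the statement come out naturally, and then checking that the slack in the assumption $\log(1/p) \leq k/12 - 5$ absorbs both the binomial tail at $j_0$ and the side conditions of Lemma~\ref{lem:3}.
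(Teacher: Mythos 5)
Your proposal is correct and follows essentially the same route as the paper's own proof: the same split of the Lemma~\ref{lem:BRW} sum at $j_0 = k/4$, the same choice $\eps = p^2/144$ feeding Lemma~\ref{lem:3}, and the same final tally $p/2 + p/2 = p$ (the paper bounds the binomial tail via Hoeffding, $e^{-k/8}$, rather than the entropy bound $2^{-k(1-H(1/4))}$, but this is immaterial). The arithmetic you cite for absorbing the side conditions into $\log(1/p) \leq k/12 - 5$ checks out and matches the paper's calculations.
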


\begin{proof}
From $\log \frac{1}{p} \leq k/12 - 5$ follows that $12 \log(17/p) \leq k$ and hence also $17 \ln (17/p) \leq k$.
Since $k \leq n$ and $5k/12 \geq \log(17k)-5$, it follows also that 
\[\log \frac 1 p \leq \frac k {12} - 5 \leq \frac k 2 - \log(17 k) \leq \frac n 2 - \log(17 k)\]
and hence $p^2 \geq 288 \cdot k^2/2^n$.
For $j \in \{0, \dots, n\}$, let $p_j$ be the guess probability of the XOR for random subsets of size $j$. From Lemma~\ref{lem:BRW} follows that
\begin{align*}
 P_{\guess}(X_T \mid T Q)_\rho
 &\leq \frac 1 {2^k} \sum_{j=0}^k \binom k j (2 p_j - 1) \\
 &\leq  \frac 1 {2^k} \sum_{j=0}^{k/4} \binom k j + \max_{j' \in [k/4+1,k]} (2 p_{j'} - 1) \cdot \frac 1 {2^k} \sum_{j=k/4+1}^k \binom k j  \\
 &\leq  \frac 1 {2^k} \sum_{j=0}^{k/4} \binom k j + \max_{j' \in [k/4+1,k]} (2 p_{j'} - 1)\;.  
\end{align*}
We have
\[  \sum_{j=0}^{k/4} \frac 1 {2^k} \binom k j = \Pr \Big [ J \leq k/4 \Big ]\;, \]
where $J = \sum_{i \in [k]} J_i$ and $J_i$ are independent and uniform on $\{0,1\}$.
From Lemma~\ref{lem:chernoff1} follows 
that 
\[ \Pr [ J \leq k/4 ] \leq \exp(- k / 8) \leq p/2\;,\]
since $k \geq 17 \ln \frac {17} p > 8 \ln \frac {2} p$.
Let $\eps := p^2/144$. Since $k \geq 17 \ln \frac {17} p$, we have
\begin{align*}
\frac 1 2 > \frac 8 k \ln \frac {17} {p}
> \frac 4 k \ln \frac {288} {p^2}
= \frac 4 k \ln \frac {2} {\eps}\;,
\end{align*}
and hence
\begin{align*}
 H_{\min}(X \mid Q)_\rho 
&\geq H \Big (\frac 4 k \ln \frac 2 \eps \Big ) n + 4 \log \frac 1 \eps + 3\;.
\end{align*}
From $p^2 \geq 288 \cdot k^2/2^n$ follows that $\eps \geq 2k^2/2^n \geq 2 (k/4)^2/2^n$.
Lemma~\ref{lem:3} implies that
\[\max_{j' \in [k/4+1,k]} (2 p_{j'} - 1) \leq 6 \sqrt{\eps}= p/2\;.\]
The statement follows from the definition of $H_{\min}$.
\end{proof}

\begin{proof}[Proof of Corollary~\ref{cor:main}]
Let $p := 2^{-H^{-1}(c/2)k/6 - 5}$, which implies
\[ H \left (\frac 6 k \log \frac {17} {p} \right ) \leq \frac{c} 2\;.\]
From $H^{-1}(c/2) \leq \frac12$ follows that
\[\log \frac 1 p = \frac{H^{-1}(c/2)} 6 k - 5 \leq \frac{k} {12} - 4\;.  \]
Since $n \geq k$ and $\frac 1 2 \geq x \geq H^{-1}(x)$, we have
\[\log \frac 1 p = \frac{H^{-1}(c/2)} 6 k - 5 \leq H^{-1}(c/2) \cdot \frac n 6 - 5 \leq \frac{c/2} {2} \cdot \frac n 6 - 5 \leq \frac{c n} {24} - 5\;,\]
which implies
\[ 8 \log \frac {12} p + 3 = 8 \log \frac {1} p + 8 \log(12) + 3 \leq \frac {cn}{3} - 40 + 32 + 3 \leq \frac{cn}{2}\]
Hence,
\[ cn \geq H \left (\frac 6 k \log \frac {17} {p} \right )n + 8 \log \frac {12} p + 3\;.\]
The statement follows from Theorem~\ref{thm:main}.
\end{proof}

\section{Lower Bounds for Random Access Codes}

Corollary~\ref{cor:main} directly implies a lower bound for random access codes:
if we choose the string $X \in \{0,1\}^n$ uniformly and the quantum system $Q$ has at most $m \leq (1-\eps)n$ qubits, then by Proposition~2' in \cite{KoeTer08}, we have $H_{\min}(X \mid Q) \geq \eps n$. Corollary~\ref{cor:RAC2} follows.

Theorem \ref{thm:KR1} or \ref{thm:KR2} in combination with Theorem \ref{thm:bitwise} can be used to give a bound for random acces codes, since $H^\eps_{\min}(X \mid Q) \geq \ell$ implies $P_{\guess}(X \mid Q) \geq 2^{-\ell} + \eps$. But even though the bounds given in Theorem~\ref{thm:KR1} and \ref{thm:KR2} are almost tight for the \emph{smooth} min-entropy, they only give weak bounds on the min-entropy, since the smoothness error $\eps$ is relatively big:
in Theorem~\ref{thm:KR1} (using $\mX = \{0,1\}^b$), we sample from $n' := n/b$ block a subset of $r := k/b$ blocks. Since $\xi \leq 1$ and $\kappa = \frac{n'}{r b} = \frac{n}{k b} \leq 0.15$, we have $b \geq \frac{n}{0.15 k}$. Therefore
\[ \eps \geq 3 e^{-r \xi^2/8} = 3 e^{-k \xi^2/(8 b) } \geq 3 e^{-0.15 k^2 \xi^2/(8 n) } \geq 2^{-O(k^2/n)}\;.\]
In Theorems \ref{thm:KR2}, it is required that
$k \geq r^4$, which implies that
\[\eps = 5 f \cdot 2^{-\sqrt{r} / 8} \geq 5 f \cdot 2^{-\sqrt[8]{k} / 8} = 2^{-O(\sqrt[8]{k})}\;.\]
Therefore, Theorem \ref{thm:KR1} and \ref{thm:KR2} in combination with Theorem~\ref{thm:bitwise} can only provide us with weak bounds for random access codes.

\section{Open Problems}


Both our sampling results only apply to the case where the sample is chosen uniformly. It would be interesting to know if they can be generalized to other sampling strategies.

\subparagraph*{Acknowledgements:} I thank Robert K\"onig, Thomas Vidick and Stephanie Wehner for helpful discussions. This work was funded by the U.K. EPSRC grant EP/E04297X/1 and the Canada-France NSERC-ANR project FREQUENCY. Most of this work was done while I was at the University of Bristol.


\end{document}